



\documentclass[sigconf]{aamas} 


\usepackage{balance} 



\setcopyright{ifaamas}
\acmConference[AAMAS '23]{Proc.\@ of the 22nd International Conference
on Autonomous Agents and Multiagent Systems (AAMAS 2023)}{May 29 -- June 2, 2023}
{London, United Kingdom}{A.~Ricci, W.~Yeoh, N.~Agmon, B.~An (eds.)}
\copyrightyear{2023}
\acmYear{2023}
\acmDOI{}
\acmPrice{}
\acmISBN{}



\acmSubmissionID{90}


\title[AAMAS-2023 Formatting Instructions]{Strategic Play By Resource-Bounded Agents in Security Games}


\author{Xinming Liu}
\affiliation{
  \institution{Massachusetts Institute of Technology}
  \city{Cambridge}
  \country{United States of America}}
\email{xinming@mit.edu}

\author{Joseph Y. Halpern}
\affiliation{
  \institution{Cornell University}
  \city{Ithaca}
  \country{United States of America}}
\email{halpern@cs.cornell.edu}


\begin{abstract}
Many studies have shown that humans are ``predictably irrational'': they do not act in a fully rational way, but their deviations from rational behavior are quite systematic. Our goal is to see the extent to which we can explain and justify these deviations as the outcome of rational but resource-bounded agents doing as well as they can, given their limitations. We focus on the well-studied ranger-poacher game, where rangers are trying to protect a number of sites from poaching. We  capture the computational limitations by modeling the poacher and the ranger as probabilistic finite automata (PFAs). We show that, with sufficiently large memory, PFAs learn to play the Nash equilibrium (NE) strategies of the game and achieve the NE utility. However, if we restrict the memory, we get more ``human-like'' behaviors, such as \emph{probability matching} (i.e., visiting sites in proportion to the probability of a rhino being there), and avoiding sites where there was a bad outcome (e.g., the poacher was caught by the ranger), that we also observed in experiments conducted on Amazon Mechanical Turk. Interestingly, we find that adding human-like behaviors such as probability matching and overweighting significant events (like getting caught) actually improves performance, showing that this seemingly irrational behavior can be quite rational. 
\end{abstract}



\keywords{Bounded rationality, Security games, Probabilistic Finite Automata}


         
\newcommand{\BibTeX}{\rm B\kern-.05em{\sc i\kern-.025em b}\kern-.08em\TeX}


\begin{document}


\pagestyle{fancy}
\fancyhead{}


\maketitle 


\section{Introduction} \label{sec:intro}
While standard economic theory assumes that people are rational, many
studies (see, e.g., \cite{Ariely10}) have shown that humans are
irrational in a systematic way. We are interested in the extent to
which these behaviors can be explained by  computational
limitations. Following a tradition that goes back to Rubinstein
\cite{Rub85} and Neyman \cite{Ney85}, we model computationally
bounded agents as probabilistic finite automata (PFAs). Earlier work
(see, e.g., \cite{W02,HPS12,Liu2020MAB}) has shown that optimal finite
automata for certain problems  can exhibit quite ``human-like''
behaviors, such as confirmation and a first-impression bias. Our goal
in this paper is to see the extent to which computational limitations
can explain and justify human behaviors in \emph{security games}
\cite{Tambe12}.     

Specifically, we consider  a (finitely) repeated two-player ranger-poacher game, based on the wildlife poaching game introduced by Kar et al. \cite{Kar15}. At each stage of the repeated game, the poacher tries to catch a rhino at one of $n$ sites, and the ranger tries to prevent the poacher from doing so. We assume that there is a commonly-known probability of a rhino being at any particular site, which does not change over time. We can formulate the stage game (i.e., the game played at each step of the repeated game) as a zero-sum normal-form game, which we show has a unique Nash equilibrium (NE).  An easy backward induction then shows that the unique NE of the finitely repeated game is to play the NE of the stage game repeatedly. We take ``rational behavior'' to be best responding to the opponent (which will mean playing the NE strategy if the opponent is).

There are several well-studied algorithms that lead to NE; we focus on one of them here: \emph{fictitious play} (FP) \cite{Brown51}. In FP, each player keeps track of what the other player has done, and best responds to the mixed strategy where the probability that an action is played is the frequency with which that strategy has been played thus far.\footnote{The term ``fictitious play'' is due to the fact that this procedure could in principle be simulated by each player without the game actually being played.  Note that there is an implicit assumption here that all actions are observed.} Robinson \cite{Robinson51} showed that in  two-player zero-sum game where both players have only a finite number of strategies (which is what we consider here), FP converges to NE strategy and utility. However, the rate of convergence of FP is often slow, and is quite sensitive to which strategy is chosen if there are multiple best responses. Abernethy, Lai, and Wibisono \cite{ALW19} recently showed that for two-player zero-sum games where the payoff matrix is a diagonal matrix, after $t$ steps, players' estimates are within $O(1/\sqrt{t})$ of NE, if they use a specific tie-breaking rule, while Daskalakis and Pan \cite{DP14} showed that with a random tie-breaking rule, after $t$ steps, for the same set of games, the estimates might be only within $O(1/t^{1/n})$ of NE, where $n$ is the number of possible strategies of one of the players.  In any case, the games that we consider do not necessarily have diagonal payoff matrices.  Moreover, there are algorithms that converge to NE much faster than FP \cite{DDK15}. Despite this, we focus on FP; our goal is not to find optimal algorithms, but to explain and justify human behavior.  FP has the advantage of being quite natural; it seems plausible that people would do something in the spirit of FP (which is partly why  it has attracted so much attention in the literature). 

To the extent that we model people as PFAs, they cannot implement full-blown FP, since it requires an unbounded number of states to keep track of the full history of play. Nevertheless, there are straightforward ways for a finite automaton to approximate FP; we start by considering that. Specifically, we approximate the probability that the opponent will go to a particular site $i$ by the fraction of times he has gone to site $i$ in the past $M$ steps. Instead of keeping a sliding window of size $M$, which requires keeping track of the order in which sites were visited, and thus requires quite a few states, each player keeps $M$ pieces of information in total and decides probabilistically what information to replace. Here $M$ can be viewed as a proxy for a player's memory capacity. At each step, both the poacher and the ranger play a best response to their beliefs of what their opponent is doing. More precisely, at each step, the player calculates the expected utility of going to each site (based on their estimate of the strategy of their opponent encoded in their memory) and plays the action with the highest expected utility, randomly choosing one if there are ties. 

We can modify Robinson's proof \cite{Robinson51} to show that if the PFA has sufficiently many states, players eventually converge to playing the NE strategies of the stage game and achieve the NE utilities. Since we are not particularly interested in the case where $M$ is large, rather than proving this result formally, we do simulations that illustrate this behavior. More interesting from our perspective is what happens if we limit the number of states of the PFA. Now the situation changes significantly; as we show, the PFA acts more human-like; the strategy it uses it somewhere between the NE strategy and \emph{probability matching} (i.e., visiting sites in proportion to the probability of a rhino being there).

Things become even more interesting if we do not treat all outcomes
equally. From a human perspective, some events are more significant
than others. Observing a potentially poisonous snake is far more
significant than observing a beetle.  We would expect humans to treat
significant events differently from less significant events.  In the
context of the ranger-poacher game, it seems reasonable to think that
a poacher views getting caught by the ranger as particularly
significant, because it gives him negative utility. (Although we do
not model poachers' snares being confiscated in our abstraction of the
game, Xu et al. \cite{XP20} observed that real-world poachers
reacted quite strongly to this event which, from their perspective, is
quite significant.) Similarly, the ranger views the poacher catching a
rhino as a significant event, because that is what gives her negative
utility. We can capture this significance by assuming that the poacher
and the ranger assign more weight to site $i$ in their memory if it is
related to an event with negative utility. Our simulations show that
taking significance into account, even in this naive way, can lead to
higher utility. We also show that the greater the weight of
significant events, the greater the improvement in the utility,
although the effect of the weightings has diminishing returns. We can
explain why this should be the case here: If the ranger or the poacher
use FP with a large memory, then the site that they consider best will
not change much, especially if they got rewarded by going there. If
the poacher overweights a site where he has been caught, he is less
likely to return there (and thus less likely to be caught by the
ranger, who will return there if she is using FP). We would expect
overweighting to have less of an impact the smaller the memory of the
poacher and ranger (since smaller memory makes returning to the same
site less likely), and this is indeed the case. Interestingly, Lieder,
Griffiths, and Hsu \cite{LGH18} have argued that
over-representation of extreme events leads to better decision-making
performance; our results suggest that this phenomenon is not just
limited to extreme events.  

To understand the effects of bounded memory and taking significance into account, for various choices of ranger strategy, we compared the performance (in terms of utility) of various parameter settings of our PFA to each other and to other poacher strategies. Our results showed that probability matching and overweighting significance can often lead to higher utility. This supports one of our key hypotheses: It can be quite rational to be (somewhat) ``irrational'', at least in the ranger-poacher game!

To see how humans actually play the ranger-poacher game, we ran experiments on Amazon Mechanical Turk (MTurk), using a number of different rhino distributions, with people playing the role of poacher for 100 rounds. The ranger in the experiment uses a PFA with $M = 100$ and $s=0$ (i.e., it does not take significance into account; we take $s=1$ if it does take significance into account).  When looking at the overall distribution (i.e., the fraction of times the human player visits each site), we found that we could roughly cluster players into three groups: \emph{level-0} poachers are non-strategic and either visit all sites with equal probability or stick to one site (they simply try to finish the game as quickly as possible); \emph{level-1} poachers best respond to level-0 rangers, which in our setting means that they probability match; and \emph{level-2} poachers best respond to level-1 rangers, which in our setting  means that they best respond under the assumption that rangers are probability matching. (These names are intentionally chosen to match the \emph{level-$k$} hierarchy of Stahl \cite{dale1993levelk}. Stahl defined level-0 players to be ones who choose a strategy at random, just as our level-0 poachers do, while level-($k+1$) players best respond under the assumption that they are playing against level-$k$ players.) Our experimental data show that most human poachers tend to probability match. As we show, this is also the case for poachers that use a PFA with small memory size $M$. As $M$ increases, our PFA will play a combination of probability matching and NE strategy, which is closer to the level-2 strategy. These and other observations suggest that modeling people as PFAs does capture important aspects of human behavior.

The rest of the paper is organized as follows: in Section~\ref{sec:game}, we formally define the ranger-poacher game as a normal-form game and prove that it has a unique NE. In Section~\ref{sec:pfa}, we define our PFA for the ranger-poacher game and show by simulations how it converges to NE strategies with sufficiently large memory and how it compares to other poacher strategies. We present the results from MTurk experiments and compare them with the simulation results in Section~\ref{sec:MTurk}. We conclude in Section~\ref{sec:discussion} with a discussion of related work and plans for future work.

\section{The Ranger-Poacher Game}\label{sec:game}
As we said, our ranger-poacher game is based on the wildlife poaching game of Kar et al. \cite{Kar15}.  There are two players, a ranger and a poacher, and a fixed number $n$ of sites that rhinos might go to.  We assume that the rhino distribution $d = (d_1, \ldots, d_n)$ is commonly known, where $d_i \in [0,1]$ is the probability that there is a rhino at site $i$ (we do not assume that $\sum_{i=1}^n d_i = 1$; there could be more than one rhino!). We denote by $\Gamma^K(d)$ the ranger-poacher game with rhino distribution $d$ and $K$ stages, whose stage game is denoted $\Gamma(d)$.  (Note that the distribution also implicitly encodes the number of sites.)  

Formally, the stage game $\Gamma(d)$ can be viewed as a normal-form game with players $P$ and $R$, whose expected payoff matrix is given in the following table, where the poacher is the row player and the ranger is the column player. 

\begin{center}
\begin{tabular}{ |c|c|c|c|c| } 
\hline
& 1 & 2 & \ldots & $n$ \\
\hline
1 & ($-1$, $1$) & ($d_1$, $-d_1$) & \ldots & ($d_1$, $-d_1$) \\ 
\hline
2 & ($d_2$, $-d_2$) & ($-1$, $1$) & \ldots & ($d_2$, $-d_2$)\\ 
\hline
\ldots & \ldots & \ldots & \ldots & \ldots\\
\hline
$n$ & ($d_n$, $-d_n$) & ($d_n$, $-d_n$) & \ldots & ($-1$, $1$)\\ 
\hline
\end{tabular}
\end{center}
We take the poacher's utility in $\Gamma^K(d)$ to be his average utility in each of the $K$ stage games, and similarly for the ranger. 

Although, in general, zero-sum games can have multiple equilibria, the ranger-poacher game has a unique NE.

\begin{proposition} For all $d$, $\Gamma(d)$ has a unique NE.
\end{proposition}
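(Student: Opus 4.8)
\emph{Proof proposal.} The plan is to exploit the fact that $\Gamma(d)$ is a finite two-player zero-sum game: by the minimax theorem it has a value $v$, and its set of Nash equilibria coincides with the product $X^{\star}\times Y^{\star}$, where $X^{\star}$ is the set of maximin strategies of the poacher and $Y^{\star}$ is the set of minimax strategies of the ranger, each a nonempty compact convex polytope. It therefore suffices to show that $X^{\star}$ and $Y^{\star}$ are singletons.

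First I would record the best-response conditions. Let $A$ be the poacher's payoff matrix, so $A_{ii}=-1$ and $A_{ij}=d_i$ for $i\neq j$; equivalently $A=d\mathbf{1}^{\top}-D$ with $D=\mathrm{diag}(1+d_1,\ldots,1+d_n)$. A short computation gives $(Aq)_i=d_i-q_i(1+d_i)$ for any ranger strategy $q$ and $(p^{\top}A)_j=S(p)-p_j(1+d_j)$ for any poacher strategy $p$, where $S(p)=\sum_i p_id_i$. Hence $(p,q)$ is a NE with value $v$ iff $(Aq)_i\le v$ for all $i$ with equality on $\mathrm{supp}(p)$, and $(p^{\top}A)_j\ge v$ for all $j$ with equality on $\mathrm{supp}(q)$. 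Imposing these equalities over all indices yields the candidate equilibrium $p^{\star}_i\propto 1/(1+d_i)$ and $q^{\star}_i=(d_i-v)/(1+d_i)$, with $v$ fixed by $\sum_i q^{\star}_i=1$ (the value so obtained is consistent with $\sum_i p^{\star}_i=1$); substituting back verifies the inequalities and establishes existence.

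For uniqueness the crucial step is to show that every maximin strategy of the poacher and every minimax strategy of the ranger is fully mixed. Granting this, the ranger's indifference condition forces $A^{\top}p=v\mathbf{1}$ for any optimal $p$, and the poacher's forces $Aq=v\mathbf{1}$ for any optimal $q$. Since $A=d\mathbf{1}^{\top}-D$ with $D$ nonsingular, $A$ and $A^{\top}$ have rank at least $n-1$; by the matrix--determinant lemma $A$ is singular exactly when $\sum_i d_i/(1+d_i)=1$, and in that case $\ker A^{\top}=\mathrm{span}\big((1/(1+d_i))_i\big)$, which is not orthogonal to $\mathbf{1}$. So the affine solution set of $A^{\top}p=v\mathbf{1}$, when nonempty, is either a point or a line not parallel to $\{\mathbf{1}^{\top}p=1\}$, and hence meets that hyperplane in exactly one point; likewise for $q$. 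This gives $|X^{\star}|=|Y^{\star}|=1$.

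The step I expect to be the main obstacle is establishing that optimal strategies are fully mixed. The natural approach is by contradiction: if an optimal $p$ gives probability $0$ to some site $k$, then the ranger has no incentive to guard $k$, so the poacher should be able to strictly increase his guaranteed payoff by moving a little probability onto $k$ --- the one exception being the borderline case in which site $k$ yields the poacher exactly the value $v$, i.e.\ $d_k=v$. Excluding or absorbing this borderline case --- equivalently, showing the candidate equilibrium above is feasible, i.e.\ $v<d_i$ for every $i$, which comes down to bounding $v$ below $\min_i d_i$ --- is the delicate point; it is most transparent when all $d_i>0$, and for $n=2$ it holds automatically since there $v\le 0$.
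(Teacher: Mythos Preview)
Your matrix identity $A=d\mathbf{1}^{\top}-D$ and the rank argument are clean, and they do settle uniqueness \emph{once} you know every optimal strategy is fully mixed. The difficulty is that the lemma you flagged as ``the main obstacle'' is not merely delicate---it is false in general, so the approach cannot be completed as written. Take $n=4$ and $d=(\varepsilon,1,1,1)$ with $\varepsilon>0$ small (all $d_i>0$, so your ``most transparent'' case). Then $C=\sum_i 1/(1+d_i)=1/(1+\varepsilon)+3/2$ and your formula gives $v=(n-1)/C-1\approx 3/(5/2)-1=1/5>\varepsilon=d_1$, so $q_1^\star=(d_1-v)/(1+d_1)<0$ and the fully mixed candidate is infeasible. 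The actual (unique) NE here has both players ignoring site~$1$ and playing the fully mixed equilibrium of the residual three-site game. Your heuristic for full support---``if $p_k=0$ then the ranger won't guard $k$, so the poacher can profitably shift mass to $k$''---fails exactly because when $d_k<v$ the poacher's payoff at an unguarded site~$k$ is $d_k<v$, so shifting mass there strictly lowers his guaranteed value. The condition you identified, $v<\min_i d_i$, is equivalent to $(n-2)\max_k c_k\le\sum_{i}c_i-\max_k c_k$ with $c_i=1/(1+d_i)$; this holds for $n\le 3$ (since $c_i\in[1/2,1]$) but can fail for every $n\ge 4$.

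The paper's proof takes a different route that never assumes anything about supports. It fixes two putative equilibria $(\sigma_P,\sigma_R)$ and $(\sigma_P',\sigma_R')$, uses interchangeability of maximin strategies to note that $(\tfrac12\sigma_P+\tfrac12\sigma_P',\sigma_R)$ and $(\tfrac12\sigma_P+\tfrac12\sigma_P',\sigma_R')$ are also equilibria, and then reads off from the poacher's indifference $(Aq)_i=d_i-(1+d_i)q_i=v$ that $r_i=r_i'$ on the union of the poacher supports. For the poacher side it manipulates the ranger's indifference equations $S(p)-p_j(1+d_j)=v$ pairwise to get $(1+d_i)(p_i-p_i')=(1+d_j)(p_j-p_j')$ on the relevant support, and combines this with $\sum_ip_i=\sum_ip_i'=1$ to force $p=p'$. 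Your linear-algebra argument could plausibly be salvaged by first showing that the equilibrium \emph{support} is uniquely determined (then applying your rank computation to the subgame on that support), but that reduction is a genuine additional step not present in your proposal.
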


\begin{proof}  
We now show that the stage game $\Gamma(d)$ has a unique NE. Suppose that the poacher and the ranger use the mixed strategies $\sigma_P = (p_1, \ldots, p_n)$ and $\sigma_R = (r_1, \ldots, r_n)$, respectively, in the stage game, where $p_i$ is the probability that the poacher goes to site $i$ and $r_i$ is the probability that the ranger goes to site $i$, for all sites $i \in [1, \ldots, n]$. Let $visit^P_i$ and $visit^R_i$ denote the pure strategies where the poacher and the ranger, respectively to visit site $i$. Taking $u_P$ and $u_R$ to denote the poacher's  and ranger's expected utility, note that  
\begin{equation}\label{eq1}
\begin{array}{ll}
u_P(visit_i^P,\sigma_R) = (1-r_i)d_i - r_i \mbox{ and }\\
u_R(\sigma_P,visit_i^R) = p_i - \sum_{j \ne i} d_j p_j.\end{array}
\end{equation}
For $(\sigma_P,\sigma_R)$ to be a NE, the poacher has to be indifferent among all the pure strategies in the support of $\sigma_P$ if the ranger plays $\sigma_R$, and similarly for the  ranger. Thus, we require that
\begin{equation}\label{eq2}
\begin{array}{ll}
u_P(visit_i^P,\sigma_R) = u_P(visit_j^P,\sigma_R)
\mbox{ if  $p_i > 0$ and $p_j > 0$,}\\
u_P(visit_i^P,\sigma_R) \ge
u_P(visit_j^P,\sigma_R) \mbox{ if $p_i > 0$ and $p_j = 0$,}
\end{array}
\end{equation}
and similarly for the ranger.  Using (\ref{eq1}) and (\ref{eq2}), it is straightforward to solve for a NE of $\Gamma(d)$.  As we now show, the NE is unique.

It is well known that a two-player zero-sum game $\Gamma$ has a unique \emph{value} $v \ge 0$.  If each player $i$ plays a \emph{maximin} strategy (intuitively, a strategy that guarantees the best possible result for player $i$ under the assumption that the other player is trying to hurt him as much as possible), then one player will get an expected utility of $v$ and the other player will get  an expected utility of $-v$. Moreover, $(\sigma_1, \sigma_2)$ is a NE of $\Gamma$ if and only if $\sigma_i$ is a maximin strategy for player $i$, for $i =1,2$. It follows that, while there may be more than one NE in $\Gamma$, one of the players gets $v$ in all the NE, while the other player gets $-v$. (See, e.g., \cite{Ferguson20} for proofs of these results.)   

We show that there is in fact a unique NE in $\Gamma(d)$. Suppose, by way of contradiction, that there are two equilibria  $(\sigma_P,\sigma_R)$ and $(\sigma_P', \sigma_R')$, where $\sigma_P = (p_1, \ldots, p_n)^T$, $\sigma_P' = (p_1', \ldots, p_n')^T$, $\sigma_R = (r_1, \ldots, r_n)^T$, and $\sigma_R' = (r_1', \ldots, r_n')^T$. Further suppose that the value of the game is $v$.  Thus, we can assume without loss of generality that in all NE, the poacher gets expected utility $v$ and the ranger gets expected utility $-v$. Since all of $\sigma_P$, $\sigma_P'$, $\sigma_R$, and $\sigma_R'$ must be maximin strategies, it follows that $(\sigma_P,\sigma_R')$ and $(\sigma_P',\sigma_R')$ must also be NE. While $\sigma_P$ and $\sigma_P'$ may have different supports, the mixture $\frac{1}{2}\sigma_P + \frac{1}{2}\sigma_P'$ is also a maximin strategy, and its support is the union of that of $\sigma_P$ and $\sigma_P'$.  It follows for all pure strategies $visit^P_i$ and $visit^P_j$ in the support of either $\sigma_P$ or $\sigma_P'$, by (\ref{eq1}), we have that $-r_i + (1-r_i)d_i = u_P(visit^P_i,\sigma_R) = u_P(visit^P_i,\sigma_R') = -r_i' + (1-r_i')d_i$. Thus, $d_i -(1+d_i)r_i = d_i - (1+d_i)r_i'$, so $r_i = r_i'$. Thus, $\sigma_R = \sigma_R'$.   

We have to work a little harder to show that  $\sigma_P = \sigma_P'$. Note that $u_R(\sigma_P,visit^R_i) = u_R(\sigma_P,visit^R_j)$, so $$p_i - \sum_{k \ne i} d_k p_k = p_j - \sum_{k \ne j} d_k.$$ Similarly, $u_R(\sigma_P',visit^R_i) = u_R(\sigma_P',visit^R_j)$, so $$p_i' - \sum_{k \ne i} d_k p_k' = p_j' - \sum_{k \ne j} d_k p_k'.$$ Simple algebra shows that $$(p_i - p_j) + d_ip_i - d_jp_j = 0 = (p_i' - p_j') + d_ip_i' - d_j p_j'.$$ Therefore, 
\begin{equation}\label{eq0}
(1+d_i)(p_i -p_i')  = (1+d_j)(p_j - p_j').
\end{equation}
If $\sigma_P \ne \sigma_P'$, then there must be some site $i$ in the support of $\sigma_P$ such that $p_i > p_i'$.  Thus, $(1+d_i)(p_i - p_i') > 0$.  But then it follows from (\ref{eq0}) that $p_j > p_j'$ for all sites $j$ in the support of either $\sigma_P$ or $\sigma_P'$. Since $\sum_i p_i = \sum_i p_i' = 1$, this is clearly a contradiction. Thus, $p_i = p_i'$ for all sites $i$, so $\sigma_P = \sigma_P'$.  
\end{proof}

It now easily
follows by backward induction on 
$K$ that the only NE in $\Gamma^K(d)$ is for the ranger and poacher to
play the NE in each stage game. Thus, the poacher's expected utility
in $\Gamma^K(d)$ if the NE is played is exactly the same as his
expected utility in the stage game $\Gamma(d)$ if the NE of that game
is played, and similarly for the ranger.  

It is not hard to construct distributions such that the poacher's expected utility in the NE of $\Gamma(d)$ is negative. Typically, this happens if the probability of finding a rhino is low. To take a simple example, in the extreme case where $d_i = 0$ for all $i$,  since there is a positive probability that the poacher is caught, no matter what strategy he uses, his expected utility will be negative. Of course, in the real world, if this were the situation, poachers simply wouldn't poach at all.  For simplicity, we have not given the poacher an action of ``no poaching'' (with a payoff of 0 to both the ranger and poacher if it is played). In our experiments, we consider only distributions $d$ such that the poacher has positive expected utility in the NE of $\Gamma(d)$.    

\section{PFAs Play the Ranger-Poacher Game}\label{sec:pfa}
In this section, we present a family of PFAs for the ranger-poacher
game, and discuss their performance; in the next section, we compare
their performance to that of people playing the ranger-poacher game. 
A PFA is just like the deterministic finite automaton, except its action and transition functions are probabilistic. Since we want our automata to produce an output (a site to visit), rather than accepting a language, we are technically looking at what have been called \emph{probabilistic finite automata with output}. Formally, we can define a PFA with output as a tuple \((Q, q_0, \Sigma, O, \gamma, \delta)\), where   
\begin{itemize}
     \item $Q$ is a finite set of \emph{states};
     \item $q_0 \in Q$ is the initial state;
     \item $\Sigma$ is the input alphabet (in the case of the
      ranger-poacher game, this will consist of the observations
``visited site $i$, the opponent visited site $j$,  
and received utility $u$'' for $i,j \in\{ 1, 
      \ldots, n\}$ and $u \in\{-1, 0,
      1\}$);  
     \item $O$ is the output alphabet (in our case this will be
``$i$'', which is interpreted as visiting site $i$, for $i \in\{
     1, 
      \ldots, n\}$);  
     \item $\gamma: Q \rightarrow \Delta(O)$ is a probabilistic action function (as usual, $\Delta(X)$ denotes the set of probability distributions on $X$);
     \item $\delta : Q \times \Sigma \rightarrow \Delta(Q)$ is a probabilistic transition function.
 \end{itemize} 
Intuitively, the automaton starts in state $q_0$ and visits a site according to distribution $\gamma(q_0)$.  It then observes the outcome $o$ of visiting the site (an element of $O$) and transitions to a state $q'$ (according to $\delta(q_0,o)$).  It then visits a site according to the distribution $\gamma(q')$, and so on. 

\subsection{Fictitious play: a review} \label{sec:pfa review}
As  we said in the introduction, with FP, the players best respond at each step to the strategy their opponent has played so far.  To make this precise, given a two-player game, where player 1 has $m$ possible strategies and player 2 has $n$ possible strategies, which we take to be $\{1,\dots, m\}$ and $\{1, \ldots, n\}$, we define a sequence of mixed strategies $x^1, x^2, \ldots$ and $y^1, y^2, \ldots$ for players 1 and 2, respectively, and sequences of pure strategies $s_1, s_2, s_3, \ldots$ and $s'_1, s'_2, s'_3, \ldots$ for players 1 and 2, respectively, as follows.  (Technically, these are all random variables, not (mixed/pure) strategies.)  Let $s_1$ be a random element of $\{1, \ldots, m\}$ and let $s_1'$ be a random element of $\{1, \ldots, n\}$ (both chosen uniformly at random).  Then let $x^t(i)$ be the fraction of times that strategy $i$ appears in $s_1, \ldots, s_t$, let $y^t(j)$ be the fraction of times that strategy $j$ appears in $s'_1, \ldots, s'_t$, let $s_{t+1}$ be a best response to $y^t$ (where if there are ties, $s_{t+1}$ is chosen at random), and let $s'_{t+1}$ be a best response to $y^t$ (where again, ties are broken at random).  This completes the description of FP. 

\subsection{The poacher's PFA} \label{sec:poacher's pfa}
Formally, we consider a family of possible PFAs for the poacher that have the form $A_{n, d, M, s} = (Q_{n,M}, q_{0},\Sigma_n, O_n, \gamma_{n, d, s},$ $\delta_{n,M,s})$, where there are 4 parameters: $n$ is the total number of sites, $d$ is the rhino distribution over sites, $M$ is a bound on the poacher's memory, and $s$ is either 0 or 1, depending on whether we want to take significance into account. We assume that $n$ and $d$ are given as part of the description of the game; we choose $M$ and $s$ (and will examine the effect of different choices). In more detail, the components of the tuple are as   follows:  
\begin{itemize}
\item A state $q \in Q_{n, M}$ has the form $([q_1,  \ldots, q_n],i)$,
where $q_1 + \cdots + q_n \le M$, as discussed earlier, and
$i \in \{1,\ldots,n\}$ is a best response for the poacher with respect
to the ranger strategy $(q_1/K,  \ldots, q_n/K)$, where $K = q_1
+ \cdots + q_n$.  (If $K = 0$, then $i=1$.) Thus, there are $n
{M+n \choose n}$ possible states. 

\item The initial state $q_0 = ([0, \dots, 0],1)$.

\item $\Sigma_n$ consist of observations of the form $(j,u) \in \{1,\ldots,n\} \times \{-1,0,1\}$; intuitively, $j$ is the site where where the ranger last went and $u$ is the poacher's utility in the last round. 

\item $O_n = \{1, \ldots, n\}$: the poacher can visit any site.
  
\item The action function $\gamma_{n, d, s}$ goes to site $j$ in a state of the form $(\cdot,j)$.

\item The transition function $\delta_{n,M,0}$ proceeds as follows. In state $([q_1, \ldots, q_n],i)$, given the  observation $(j,u)$, the new state has the form $([q_1', \ldots, q_n'],i')$, where $i'$ is a best response (chosen at random among all the best responses) to the strategy $[q_1'/K', \ldots, q_n'/K']$, where $K' = q_1' + \cdots+ q_n'$, and $[q_1', \ldots, q_n']$ is computed using the following procedure: If $K = q_1 + \cdots +q_k < M$, then $q_j' = q_j + 1$ and $q'_{j'} = q_{j'}$ if $j' \ne j$, and if $K \ge M$, then $j'$ is chosen at random according to the distribution $[q_1/K, \ldots, q_n/K]$. If $j' \ne j$, $q'_{j'} = q_{j'}-1$, $q_j' = q_j +1$, and $q_{j''}'= q_{j''}$ for $j'' \notin \{j,j'\}$, whereas if $j'=j$, then $q_{j''}' = q_{j''}$ for all $j''$. The transition function $\delta_{n,M,1}$ proceeds the same way, except in the case that $u=-1$ (which means that the poacher was caught).  In this case, rather than choosing one site at random to decrement, two sites are chosen, and $q_j$ is increased by 2.

\end{itemize}

The ranger's PFA is similar in spirit. The idea is straightforward: If the poacher knew the fraction $r_i$ of times that the ranger has visited each site $i$ by time $t$, he could construct the ranger's strategy $y^t$ by taking $y^t(i) = r_i$. He could then calculate a best response to $y^t$ by calculating the utility of visiting each site, and then choosing the site that gives the highest utility, randomizing in the case of ties. Since calculating $r_i$ exactly requires unbounded memory, we instead approximate it in a way that uses relatively few states.

Suppose there are a total of $n$ sites and the poacher can remember up to $M$ events.  Suppose that the game has been played for $M'$ rounds. Roughly speaking, we want the poacher to remember what happened in the past $M$ rounds (or the last $M'$ rounds if $M' \le M$). We take the poacher's memory to have the form $[q_1, \ldots, q_n]$, where $q_i$ is an approximation to the number of times that the ranger has visited site $i$ in the past $M$ rounds (or the exact number of times that the ranger has visited site $i$ in the past $M'$ rounds, if $M' \le M$). Note that there are ${M + n \choose n}$ memories of this form, since we can associate a state with a string of $n$ 0s and $M$ 1s: $q_1$ is the number of 1s to the left of the first 0; for $2 \le i \le n$, $q_i$ is the number of 1s between the $(i-1)$st and $i$th 0s; and the number of 1s after the last 0 is $\max(M-M',0)$.\footnote{We could have also taken a state to be a sliding window of the last $M$ observations made by the poacher. But then there would be $n^M$ possible states.  Since in our cases $n$ is small (typically 3--5), while $M$ is relatively large (10--199), $n^M$ is significantly larger than ${M+n \choose n}$.}     

We update the memory using ideas similar to those used by Liu and
Halpern \cite{Liu2020MAB}. As long as the memory is not full
(i.e., as long as $q_1 + \cdots + q_n \le M$),  if the ranger goes to
site $j$, then $q_j$ is increased by 1. Once the memory is full, the
poacher first decreases some $q_{j'}$ chosen at random according to
(the poacher's estimate of) the ranger's current strategy by 1 (e.g.,
if the current strategy is  $(0,0.4,0.6)$, then $q_2$ is decreased by
1 with probability 0.4 and $q_3$ is decreased by 1 with probability
0.6), then $q_j$ is increased by 1. As we show by simulation, this
approach to approximating the ranger's strategy is reasonably
accurate. In Figure~\ref{fig:1}, we fix the ranger's strategy to be
$(0.2, 0.3, 0.5)$ and compare that with the poacher's estimate for the
first 1000 steps, for $M=100$ and $M=10$.
(In the figure, the ``actual lines'' represent the frequencies of
visits by the ranger; the ``estimate lines'' are the poacher’s estimate
of those frequencies, given his memory limitations.) 
In both cases, the estimate
fluctuates around the true strategy, with the amount of fluctuation
increasing as the $M$ decreases, as we would expect. The results are
similar for other rhino distributions. 

\begin{figure}[htb]
    \centering
    \includegraphics[width = 8.5cm]{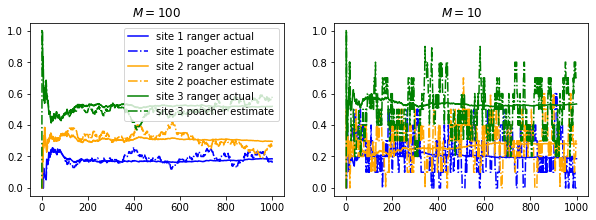}
    \caption{Comparing the poacher's estimation to the ranger's actual strategy for various memory sizes.}  
    \label{fig:1}
\end{figure}

To take significance into account, if the ranger and the poacher both
go to site $i$ (so the poacher was caught), then $q_j$ and $q_{j'}$
are decreased by 1 for two sites $j$ and $j'$ chosen at random
according to the current strategy, instead of just one site (we may
have $j=j'$, in which case $q_j$ is decreased by 2), and $q_i$ is
increased by 2 (rather than 1).  By doing this, the poacher is
effectively overweighting sites where he was caught by the ranger. 

To further investigate the effect of taking significance into account,
we fixed both the poacher and ranger to use a PFA with $M = 1000$ for
games with 1000 rounds and various rhino distributions. The parameter
$s$ indicates the weighting for significance; the poacher does not
take significance into account when $s=0$. When $s=1$, the poacher
overweights significant events by increasing the count by 2 (instead
of 1 if $s=0$). Similarly, the poacher will increase the count by
$s+1$ for other values of $s$. In this case, we can control the
weighting for significance through parameter $s$. Table~\ref{sig}
shows the poacher's utilities for various significance weightings and
rhino distributions. All results are averaged over 100
repetitions. Our simulations show that the greater the weight, the
greater the improvement in the poacher’s utility, although the effect
has diminishing returns. 

\begin{table}[htb]
\centering
\begin{tabular}{ l l l l l l}
\hline
rhino distribution           & $s=0$ & $s=1$ & $s=2$ & $s=3$ & $s=4$ \\ \hline
$(0.2, 0.4, 0.6, 0.8)$       & 0.098 & 0.165 & 0.188 & 0.203 & 0.205 \\ \hline
$(0.3, 0.8, 0.7, 0.5)$       & 0.165 & 0.233 & 0.259 & 0.271 & 0.275 \\ \hline
$(0.9, 0.9, 0.9)$            & 0.267 & 0.375 & 0.415 & 0.425 & 0.435 \\ \hline
$(0.9, 0.6, 0.4, 0.9)$       & 0.509 & 0.605 & 0.626 & 0.641 & 0.652 \\ \hline
\hline
\end{tabular}
\caption{Poacher's utility for various significance weightings under different rhino distributions.}
\label{sig}
\end{table}

Intuitively, taking significance into account
improves the poacher's utility because by overweighting sites
where the poacher was caught by the ranger, the poacher is more likely
to avoid going to
those sites in future rounds, and hence be less likely to get
caught.
Roughly speaking, by overweighting, the poacher is doing the right
thing faster.  This intuitions is reinforced by our simulations, which
show that, not only does the poacher get higher
utility by taking significance into account, but the poacher converges
faster to the NE 
strategy.
We are currently working on formalizing these observations (see
Section~\ref{sec:discussion}).

\subsection{The effect of different memory sizes} \label{sec:memory}
As we said in the introduction, Robinson \cite{Robinson51} showed
that in a two-player zero-sum game where both players have only a
finite number of strategies, FP converges to NE.  That is, the
frequency that the players go to site $i$ over the first $N$ rounds
converges to the probability of going to site $i$ according to the
NE. The utilities also converge to these in NE. Not surprisingly, we
also get approximate convergence using our PFA with sufficiently many
states. We did simulations with various rhino distributions, for games
with 1000 rounds. Figure~\ref{fig:2} describes the results for the
rhino distribution $(0.2, 0.4, 0.6, 0.8)$ (the results were the same
for all other distributions we considered). As shown in
Figures~\ref{fig:2}(a) and \ref{fig:2}(b), we get quite a good
approximation with $M = 1000$ since FP converges to NE
strategies. However, as the memory size decreases, the poacher gets
closer and closer to probability matching.  (As we discuss below, the
ranger is close to probability matching all along.) As shown in
Figure~\ref{fig:2}(d), if $M=100$, after about 600 steps, the
frequency with which the ranger goes  to site 1 is almost identical to
the frequency with which the ranger goes to site 1 in NE.  As shown in
Figure~\ref{fig:2}(f), even with $M=10$, the ranger is quite close to
NE.  On the other hand, as shown in Figure~\ref{fig:2}(e), with
$M=10$, the poacher is much closer to probability matching. Actually,
as we now explain, the behavior of both the poacher and the ranger can
be thought of as a convex combination of probability matching and
playing the NE. 

Note that when the ranger plays her part of the NE, all the poacher's
responses are equally good. This might suggest that if the poacher is
best responding, he should just go to all sites with equal
likelihood. This clearly does not happen. A closer look explains
why. As shown in Figure~\ref{fig:1}, the poacher's estimation of the
ranger's strategy will fluctuate around the actual strategy. The
smaller $M$ is, the greater the fluctuation. If the fluctuation at
round $t$ is such that only one site gets lower probability than it
does in NE, then the best response for the poacher at round $t$ is to
go to that site. But if at round $t$ there are two sites, say $j$ and
$j'$, such that the ranger goes to sites $j$ and $j'$ with frequency
less than in NE (according to the poacher's estimate), then the
poacher's best response depends on how much below the NE strategy the
frequency of the ranger going to each of site $j$ and $j'$ is, and the
probability of finding rhinos at sites $j$ and $j'$.  For example, an
equal decrease in the probability of going to $j$ and $j'$ makes the
site with higher rhino probability the best option.  In particular,
this means that symmetric fluctuations result in the poacher
preferring sites with higher rhino probabilities, roughly in
proportion to their respective probability. Thus, decreasing memory
results in a greater likelihood of probability matching for the
poacher, even though the poacher continues to behave rationally (i.e.,
by best responding to his estimate)!
Put another way, rational but resource-bounded agents will act like
probability matchers, and the more resource-bounded they are, the more
marked this behavior will be.

This effect is less marked with the ranger. To see why, note that with the rhino distribution $(0.2, 0.4, 0.6, 0.8)$, the ranger's strategy in NE, $\sigma_R^* = (0.08, 0.22, 0.31, 0.39)$, is quite close to the probability-matching strategy, $(0.1, 0.2, 0.3, 0.4)$. This is a general phenomenon in the ranger-poacher game, and not just an artifact of this rhino distribution.  Thus, even though the ranger's strategy is also affected by probability matching as memory size decreases, the effect is not as noticeable. (We remark that Erev and Barron \cite{EB05} also pointed out the connection between probability matching and small samples in a simpler setting.)

\begin{figure}[htb]
    \centering
\includegraphics[width = 8.5cm]{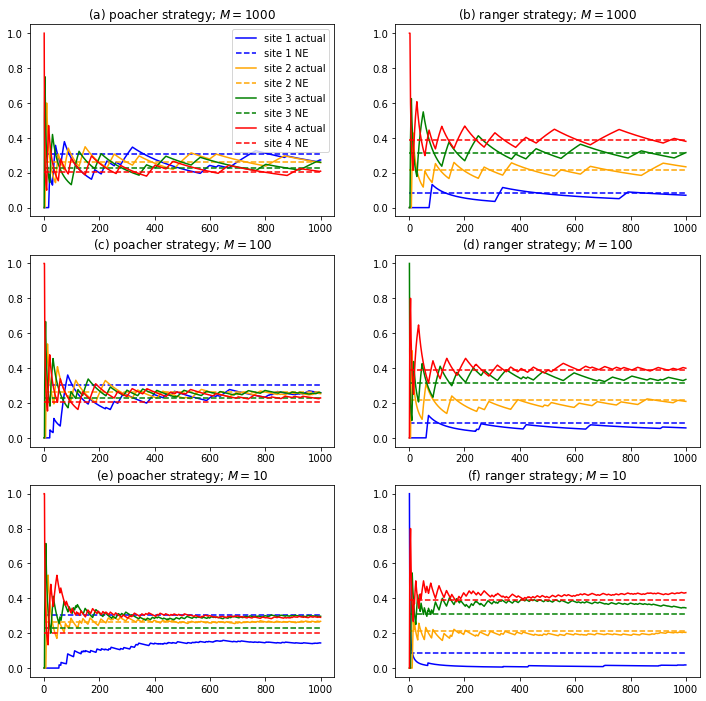}
\caption{PFA strategies with various memory sizes.}
    \label{fig:2}
\end{figure}

\subsection{Comparing strategies} \label{sec:strategies}
To understand the effects of memory and taking significance into
account, we compared the performance (in terms of utility) of various
parameter settings of our PFA to each other and to other poacher
strategies for various choices of ranger strategies. We considered
eight poacher strategies: (1) the NE strategy, which can be viewed as
a baseline; (2) FP with unbounded memory (although this leads to the
NE strategy if the ranger also uses it, it does not in general); (3)
\emph{multiplicative weight updating} (MWU) \cite{AHK12}, a strategy
that has been shown to lead to NE quickly; (4) \emph{utility matching}
(UM) (instead of best responding, a site is chosen with probability
proportional to its expected utility); (5) PFA1: a PFA with limited
memory and no overweighting of significant events ($M = 100, s=0$);
(6) PFA2: a PFA with limited memory that overweights significant
events ($M = 100, s=1$); (7) PFA3: a PFA with very limited memory and
no overweighting ($M = 10, s=0$); (8) PFA4: a PFA with very limited
memory that overweights significant events ($M = 10, s=1$). We want to
see how each of these eight poacher strategies plays against the
various ranger strategies. We consider four ranger strategies: (a) the
NE strategy; (b) probability matching (PM) based on the rhino
distribution; (c) FP with unbounded memory; (d) PFA with small memory
and no overweighting ($M = 10, s=0$). Notice that ranger strategies
(a) and (b) are nonadaptive; the ranger visits a site according to a
predetermined distribution at each step.  In contrast, strategies (c)
and (d) are adaptive; The ranger decides which site to visit next
based on what the poacher has done in previous rounds. For each
ranger-poacher pair, we simulated the game for 1000 rounds, using
various rhino distributions, and repeated each game 100
times. Figure~\ref{fig:3} shows the boxplots \footnote{A boxplot is a
standard way of displaying the dataset. The box in the middle shows
the range from the lower quartile (25$^{th}$ percentile) to the upper
quartile (75$^{th}$ percentile), with the red line marking the
median. The \emph{interquartile range} (IQR) is the difference between
values at the 75$^{th}$ and 25$^{th}$ percentile.  A whisker is drawn
from the top of the box to the largest observed point that is within
1.5 times the IQR of the value at the 75$^{th}$ quartile.  Similarly,
a whisker is drawn from the bottom of the box to the smallest observed
point that is within 1.5 times the IQR of the value at the 25$^{th}$
quartile. (See \citet[pp. 234--238]{Dekking05}.)} for the rhino
distribution $(0.2, 0.4, 0.6, 0.8)$. 

We get similar results for other rhino distributions. Figures~\ref{fig:5} and ~\ref{fig:6} show the boxplots of poacher's utilities for various poacher and ranger strategies for rhino distributions $(0.3, 0.8, 0.7, 0.5)$ and $(0.9, 0.9, 0.9)$, respectively. 

\begin{figure}[htb]
    \centering
    \includegraphics[width=8.5cm]{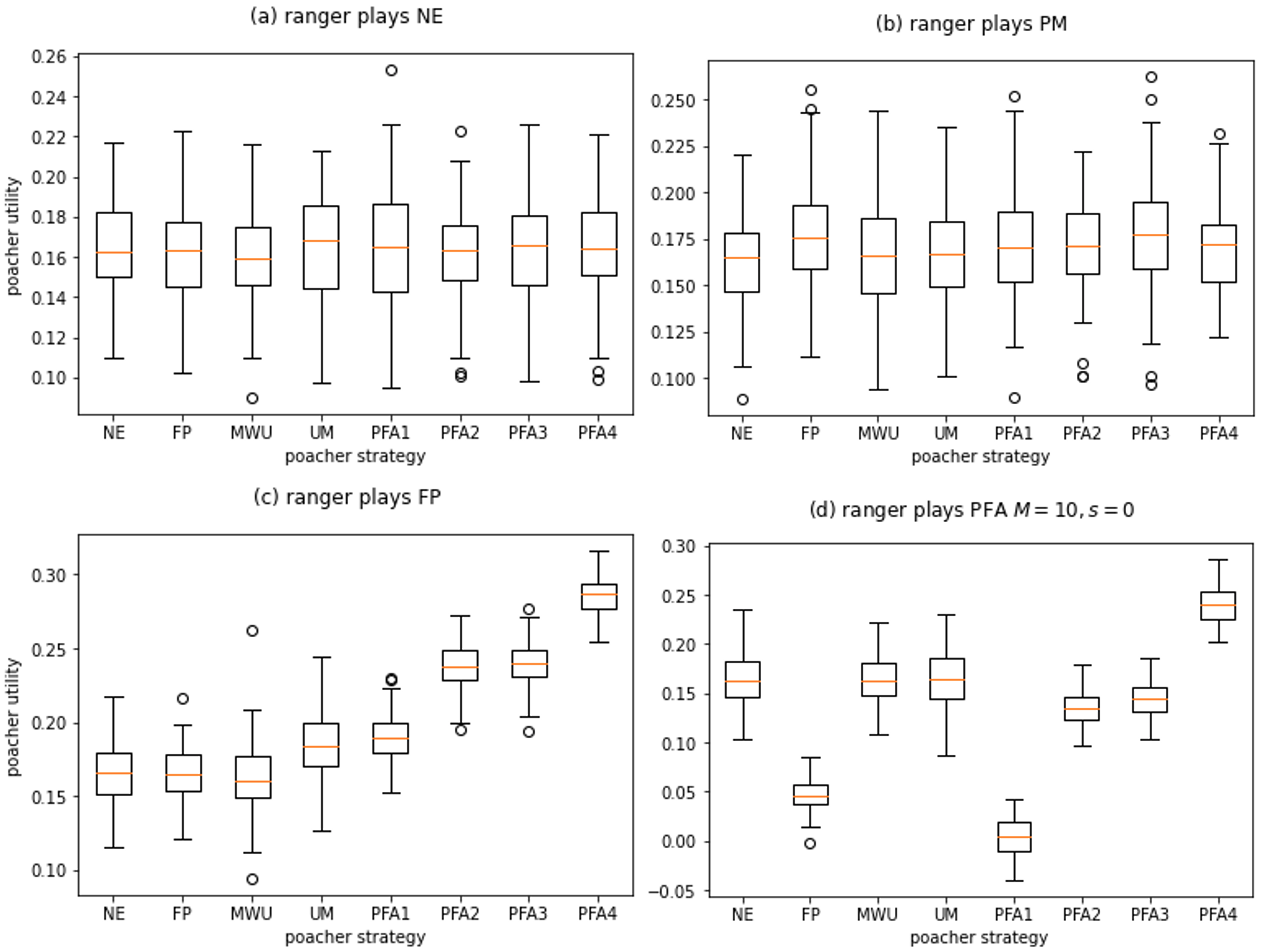}
    \caption{Poacher's utility for various poacher and ranger strategies and rhino distribution (0.3, 0.8, 0.7, 0.5).} 
 \label{fig:5}
\end{figure}

\begin{figure}[htb]
    \centering
    \includegraphics[width=8.5cm]{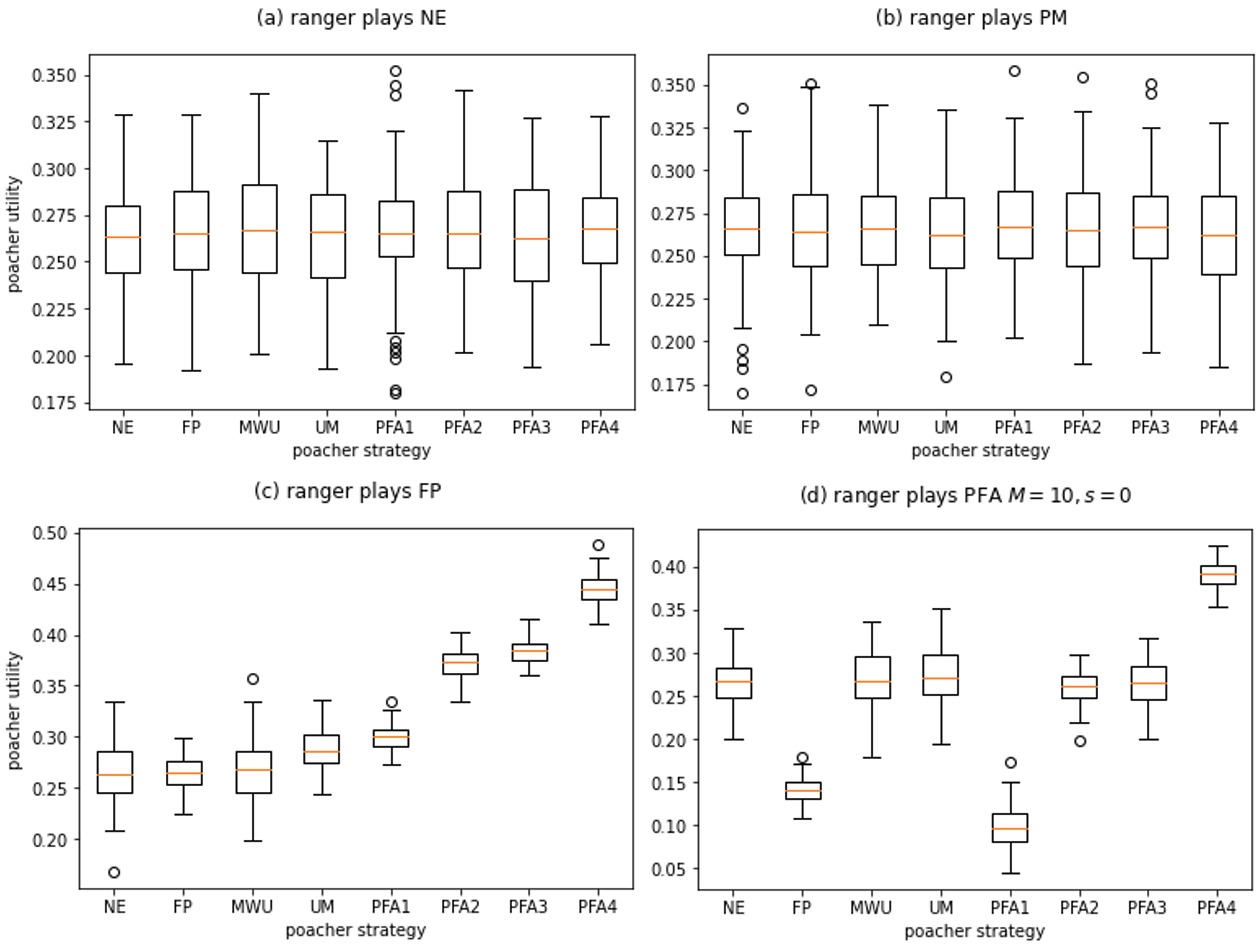}
    \caption{Poacher's utility for various poacher and ranger strategies and rhino distribution (0.9, 0.9, 0.9).} 
 \label{fig:6}
\end{figure}

The results show that if the ranger uses a nonadaptive strategy (NE or PM), then all the poacher's strategies do equally well, whereas if the ranger uses an adaptive strategy (FP or PFA), a PFA with small memory size that overweights significance (PFA4) does extremely well.

\begin{figure}[htb]
    \centering
    \includegraphics[width=8.5cm]{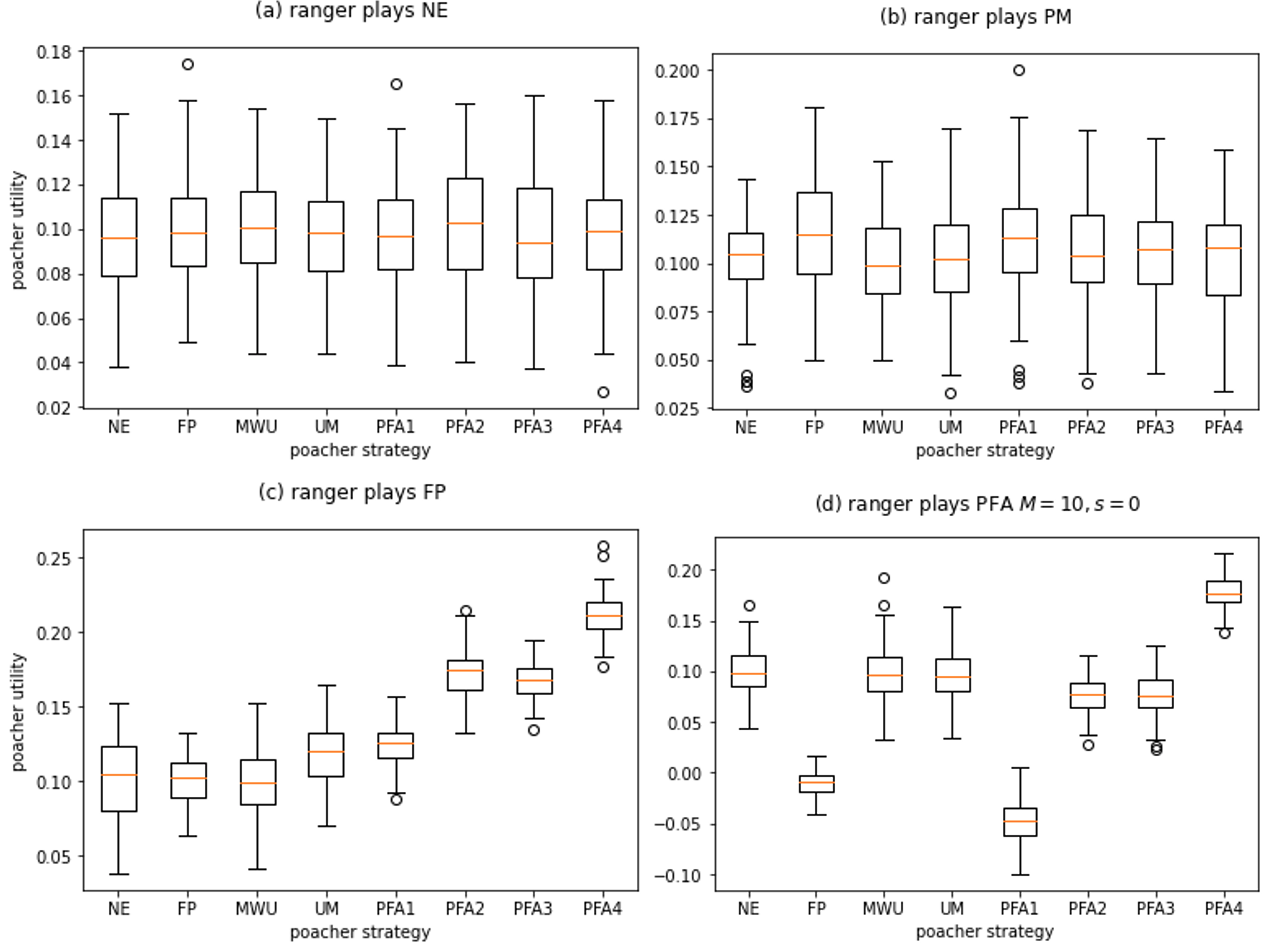}
    \caption{Poacher's utility for various poacher and ranger strategies and rhino distribution (0.2, 0.4, 0.6, 0.8).} 
 \label{fig:3}
\end{figure}

As shown in Figures~\ref{fig:3}(a) and ~\ref{fig:3}(b), when the ranger plays a non-adaptive strategy, all the poacher strategies do roughly equally well.  That means, for example, the poacher does not suffer due to the reduced memory size of 10 when the ranger plays the NE strategy; he still gets essentially the same payoff as he would when playing the NE strategy, or using a strategy like MWU or FP that requires unbounded memory. However, as shown in Figure~\ref{fig:3}(c) and ~\ref{fig:3}(d), using a PFA with limited memory can significantly \emph{improve} performance, especially when significance is taken into account, if the ranger's strategy is adaptive. It is not hard to explain why this should be so: If the ranger is playing FP with unlimited memory, she tends to be ``sticky''; if she views site $i$ as the best site at time $t$, she is likely to still view $i$ as the best site at time $t+1$. If the poacher was caught at site $i$ at time $t$, he is more likely to avoid $i$ if he overweights being caught (i.e., if $s=1$), and if he does some probability matching rather than just going to what he views as the best site. Thus, far from being irrational, overweighting significant events and probability matching are completely rational if the ranger is using FP. If the ranger is also using PFA with a relatively small memory size, the ranger will also switch more often, so using FP or PFA with a larger memory (PFA1) will make the poacher stickier and lead to lower utility for the poacher. However, overweighting significance and probability matching still help improve the poacher's utility significantly. 

\section{Experiments}\label{sec:MTurk}
We  wanted to understand the extent to which our PFAs capture human behavior in the ranger-poacher game. We conducted experiments on MTurk. In these experiments, human subjects play the role of the poacher; they must decide which site to visit in each round. Each game lasts for 100 rounds. Subjects get \$1 for completing the task plus a bonus of \$0.10 for each point they obtain. Since the game lasts for 100 rounds, the bonus is usually significantly more than the fixed payment. Thus, they are (somewhat) incentivized to maximize their payoff by playing strategically. 
We submitted an IRB consent form and qualified for exemption from IRB
review.

Subjects are given the rhino distribution and are told that the ranger
knows it as well. They are also told that, in each round, they and the
ranger will simultaneously choose a site to visit.  After these
choices are made, they are revealed, and the subjects discover where
the rhinos were, so they can see whether they caught a rhino or were
caught.  They get $1$ point if they catch a rhino without being
caught, $-1$ point if they are caught, and $0$ points otherwise.
See the supplementary material for the exact instructions and screenshots of the game interface.

The experiments involved 94 participants, each playing two
100-round games with two rhino distributions. We considered four rhino
distributions: (a) $(0.9, 0.6, 0.2)$, (b) $(0.9, 0.6, 0.4, 0.9, 0.1)$,
(c) $(0.8, 0.3, 0.8, 0.3)$, and (d) $(0.3, 0.8, 0.7, 0.5)$. We used
the PFA to play the role of the ranger, with $M = 
100$ and $s=0$ (i.e., it does not take significance into account).
$s=0$.

Since there might be some learning during the game, we treat the first 75 of the 100 rounds as the learning period, and average only the decisions in the last 25 rounds to get the poacher's overall strategy. Applying $k$-means clustering, we clustered players on  MTurk into three groups: (1) level-0 poachers, who visit all sites with equal probability or simply stick to one site; (2) level-1 poachers, who visit each site with probability roughly proportional to the rhino distribution; and (3) level-2 poachers, who seem to visit sites with probability proportional to their utility under the assumption that the rangers are playing a level-1 strategy.

We suspect that level-0 players are often ones who simply want to get the game done as quickly as possible, so that they can collect the fixed payment. Therefore, we focus on level-1 and level-2 players. We can best approximate level-1 poacher behavior using a PFA with $M=2$ and $s=1$; as explained earlier, a PFA that has a small memory will do more probability matching. We can best approximate level-2 poacher with a PFA with $M=10$ and $s=0$. The comparisons in overall strategy are illustrated in Figure~\ref{fig:4} for rhino distribution $(0.8, 0.3, 0.8, 0.3)$.

Each PFA was simulated 100 times. In the figure, we again use boxplots
to describe the distributions. Since the means here are occasionally
significantly different from the medians in these examples,   we
include the means as well as the medians (using a green triangle for
the mean). As the figure shows, our PFAs do a relatively good job of
capturing these two types of behaviors, despite using only two
features: the memory size and whether we take significance into
account.   

\begin{figure}[htb]
    \centering
    \includegraphics[width=8.5cm]{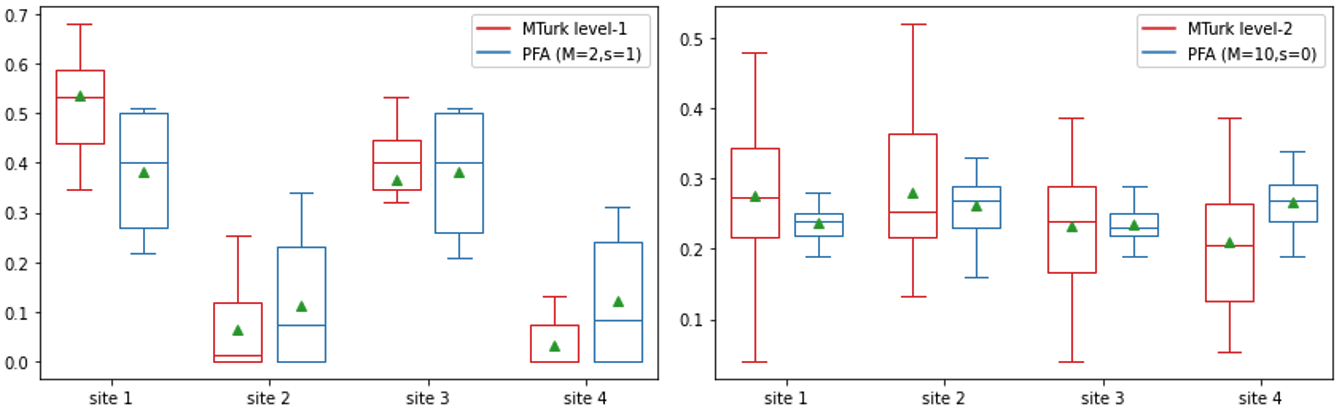}
    \caption{Comparing level-1 (left) and level-2 (right) behaviors (in frequency of visiting each site) between humans and PFAs for rhino distribution (0.8, 0.3, 0.8, 0.3).} 
 \label{fig:4}
\end{figure}

In addition to the overall strategy, we also considered another metric of decision-making dynamics: stickiness. Our experiments show that people were stickier when they obtained 0 or 1 point, compared to when they get caught; we observe qualitatively similar behavior with PFAs. 
We define stickiness as the probability of visiting same site at round $t+1$  as at round $t$. Stickiness is inherent in FP; if going to site $i$ was a best response for the poacher, and the poacher caught a rhino there without being caught, then going to site $i$ will continue to be a best response.  On the other hand, if the poacher is caught at site $i$, this will increase the poacher's estimate of the ranger being at site $i$ (particularly if we take significance into account), and thus make it less likely that the poacher will return to site $i$. We consider how sticky humans and the PFA were as a function of the reward received at round $t$. In Table~\ref{t2}, we illustrate the degree of stickiness of the poacher. Again, we played the game for 100 rounds and took the ranger's strategy to be a PFA with $M=100$ and $s=0$. As shown in Table~\ref{t2}, people  were  stickier  when  they obtained 0 or 1 point than when they got -1 points; we observe qualitatively similar behavior with our PFAs. 

\begin{table}[htb]
\centering
\begin{tabular}{l l l l l}
\hline
                 & -1 points & 0 points &  1
                 point \\ \hline 
Humans           & 0.3546 & 0.3820 & 0.4936 \\ \hline
PFA$_{M=2,s=0}$ & 0.0040 & 0.7479 & 0.8756 \\ \hline
PFA$_{M=2,s=1}$ & 0.0 & 0.7416 & 0.8653 \\ \hline
PFA$_{M=10,s=0}$ & 0.4323 & 0.8488 & 0.9238 \\ \hline
PFA$_{M=10,s=1}$ & 0.1021 & 0.8250 & 0.9117 \\ \hline
\hline
\end{tabular}
\caption{The probability that the poacher goes to same site at round
$t+1$ conditional on the poacher receiving $-1$ points (resp., 0
points, 1 point)  at round $t$.} 
\label{t2}
\end{table}

\section{Discussion and Related Work}\label{sec:discussion}
There has been a great deal of recent interest in modeling human
behavior in the ranger-poacher game. Perhaps most relevant to us is a
sequence of papers by Tambe and his collaborators. Nguyen et
al. \cite{Nguyen2013SUQR} proposed the \emph{Subjective Utility
Quantal Response} (SUQR) model. This model already allows for poachers
that are boundedly rational, in the sense of not necessarily best
responding.  It assumes that a poacher's subjective utility is
characterized by various features (e.g., the ranger’s coverage
probability, the poacher’s own reward and penalty at each target), and
tries to learn  the weights of these features from available attack
data so as to predict the behavior of poachers. It further assumes
that there is a homogeneous population of poachers; Yang et
al. \cite{yang2014PAWS} refined the model by allowing different
poachers to be characterized by different weight vectors. Kar et
al. \cite{Kar15} further refined the model by considering
successes and failures of the poacher’s past actions. Feng et
al. \cite{FST15} considered poachers with a fixed memory
$\Gamma$, which is the number of rounds of past observations they
respond to, similar to our use of $M$ in this paper. Kar et
al. \cite{Kar17} presented a behavior model based on an ensemble of decision trees. The goal of all this work was essentially to learn and predict human behavior.  This is part of a more general thrust of trying to model human play in games (see, e.g., \cite{wright2010beyond,WLB17,WLB19}). By way of contrast, our goal is to see the extent to which we can \emph{explain} and \emph{justify} apparently irrational human behavior (like probability matching and overweighting of significant events) as the outcome of computational limitations. To that end, we model resource-bounded poachers and rangers as PFAs. We showed that quite rational behavior (i.e., best responding) can lead to behaviors that have been viewed as irrational, namely, probability-matching and overweighting, as we limit the memory size. However, our results show that this so-called irrational behavior actually leads to \emph{better} outcomes. 

In future work, we would like to consider the effect of another
trait: stubbornness.  We can model this by assuming that there is a
stubbornness parameter $\alpha \in [0,1]$: with probability $\alpha$,
the poacher ignores the current information (regarding where the
ranger went), and just plays according to (his estimate of) his own
current strategy. It seems reasonable to expect people to grow more
stubborn over time. After all, after they have played for a while,
they feel that they understand what is going on and have already
settled on a strategy. There is no point in putting in the cognitive
effort of looking at what the ranger is doing and updating.  Indeed,
we find that if we slowly increase the poacher's stubbornness over
time, we get better behavior: the strategy curve is smoother, and
fluctuates less around NE. Of course, if we increase it too quickly,
then the poacher will settle into a ``bad'' strategy and not converge
to NE.  Moreover, using stubbornness requires the poacher to keep
track of his own strategy, which squares the number of states
needed. We hope to explore the tradeoffs involved with considering
stubbornness in future work.  

The fact that computational limitations lead both to more human-like
behavior and (often) to better outcomes in the ranger-poacher game
reinforces similar results obtained in other contexts
\cite{W02,HPS12,Liu2020MAB}.  
This suggests a rather rich line of future research.  For one thing,
it would be of interest to see if behavior in other games, such as
coordination games, can be explained and justified by computational
limitations.  For example, people are quite good at
coordinating, but do not always manage.  We suspect that computational
limitations play a fundamental role here.  Understanding the effect of
computational limitations better in a number of games of interest
better would lead to the design of better mechanisms for these games.
Moving up a level, getting such a deeper understanding might
lead a more general theory formalizing when and the extent
to which behaviors such as probability matching are consequences of
computational limitations and when behaviors such as overweighting
lead to better outcomes.  For example, we gave a somewhat informal
argument earlier for  why we see probability matching in the
ranger-poacher game.  It should not be hard to formalize this, but it
would be more interesting to get a general result explaining when we
would expect to see probability matching as an outgrowth of
computational limitations.   We look forward to investigating and
reporting on these issues in the future.

\newpage

\bibliographystyle{chicago}
\bibliography{z,joe,kash}

\begin{thebibliography}{}

\bibitem[\protect\citeauthoryear{Abernethy, Lai, and Wibisono}{Abernethy
  et~al.}{2019}]{ALW19}
Abernethy, J., K.~A. Lai, and A.~Wibisono (2019).
\newblock Fast convergence of fictitious play for diagonal payoff matrices.
\newblock Available at http://arxiv.org/abs/1911.08418.

\bibitem[\protect\citeauthoryear{Ariely}{Ariely}{2010}]{Ariely10}
Ariely, D. (2010).
\newblock {\em Predictably Irrational: The Hidden Forces That Shape Our
  Decisions}.
\newblock New York, NY: Harper Perennial.

\bibitem[\protect\citeauthoryear{Arora, E.Hazan, and Kale}{Arora
  et~al.}{2012}]{AHK12}
Arora, S., E.Hazan, and S.~Kale (2012).
\newblock The multiplicative weights update method: {A} meta-algorithm and
  applications.
\newblock {\em Theory of Computing\/}~{\em 8}, 121--164.

\bibitem[\protect\citeauthoryear{Brown}{Brown}{1951}]{Brown51}
Brown, G.~W. (1951).
\newblock Iterative solutions of games by fictitious play.
\newblock In T.~C. Koopmans (Ed.), {\em Activity Analysis of Production and
  Allocation}, pp.\  374--376. New York: Wiley.

\bibitem[\protect\citeauthoryear{Daskalakis, Deckelbaum, and Kim}{Daskalakis
  et~al.}{2015}]{DDK15}
Daskalakis, C., A.~Deckelbaum, and A.~Kim (2015).
\newblock Near-optimal no-regret algorithms for zero-sum games.
\newblock {\em Games and Economic Behavior\/}~{\em 92}, 327--348.

\bibitem[\protect\citeauthoryear{Daskalakis and Pan}{Daskalakis and
  Pan}{2014}]{DP14}
Daskalakis, C. and Q.~Pan (2014).
\newblock A counter-example to {K}arlin's strong conjecture for fictitious
  play.
\newblock In {\em Proc.~55th IEEE Symposium on Foundations of Computer
  Science}, pp.\  71--78.

\bibitem[\protect\citeauthoryear{Dekking}{Dekking}{2005}]{Dekking05}
Dekking, F.~M. (2005).
\newblock {\em A Modern Introduction to Probability and Statistics}.
\newblock Springer.

\bibitem[\protect\citeauthoryear{Erev and Barron}{Erev and Barron}{2005}]{EB05}
Erev, I. and G.~Barron (2005).
\newblock On adaptation, maximization, and reinforcement learning among
  cognitive strategies.
\newblock {\em Psychology Review\/}~{\em 112\/}(4), 912--931.

\bibitem[\protect\citeauthoryear{Feng, Stone, and Tambe}{Feng
  et~al.}{2015}]{FST15}
Feng, F., P.~Stone, and M.~Tambe (2015).
\newblock When security games go green: {D}esigning defender strategies to
  prevent poaching and illegal fishing.
\newblock In {\em Proc.~24th International Joint Conference on Artificial
  Intelligence (IJCAI 2015)}, pp.\  2585--2595.

\bibitem[\protect\citeauthoryear{Ferguson}{Ferguson}{2020}]{Ferguson20}
Ferguson, T.~S. (2020).
\newblock {\em A Course in Game Theory}.
\newblock Singapore: World Scientific.

\bibitem[\protect\citeauthoryear{Halpern, Pass, and Seeman}{Halpern
  et~al.}{2012}]{HPS12}
Halpern, J.~Y., R.~Pass, and L.~Seeman (2012).
\newblock I'm doing as well as {I} can: modeling people as rational finite
  automata.
\newblock In {\em Proc.~Twenty-Sixth National Conference on Artificial
  Intelligence (AAAI '12)}, pp.\  1917--1923.

\bibitem[\protect\citeauthoryear{Kar, Fang, {Delle Fave}, Sintov, and
  Tambe}{Kar et~al.}{2015}]{Kar15}
Kar, D., F.~Fang, F.~{Delle Fave}, N.~Sintov, and M.~Tambe (2015).
\newblock ``{G}ame of {T}hrones'': when human behavior models compete in
  repeated {S}tackelberg security games.
\newblock In {\em Proc. 2015 International Conference on Autonomous Agents and
  Multiagent Systems (AAAMAS '15)}, pp.\  1381--1390.

\bibitem[\protect\citeauthoryear{Kar, Ford, Gholami, Fang, Plumptre, Tambe,
  Driciru, F., Rwetsiba, Nsubaga, and Mabonga}{Kar et~al.}{2017}]{Kar17}
Kar, D., B.~Ford, S.~Gholami, F.~Fang, A.~Plumptre, M.~Tambe, M.~Driciru,
  W.~F., A.~Rwetsiba, M.~Nsubaga, and J.~Mabonga (2017).
\newblock Cloudy with a chance of poaching: {A}dversary behavior modeling and
  forecasting with real-world poaching data.
\newblock In {\em Proc. 16th International Conference on Autonomous Agents and
  Multiagent Systems (AAMAS '16)}, pp.\  159--167.

\bibitem[\protect\citeauthoryear{Lieder, Griffiths, and Hsu}{Lieder
  et~al.}{2018}]{LGH18}
Lieder, F., T.~L. Griffiths, and M.~Hsu (2018).
\newblock Overrepresentation of extreme events in decision making reflects
  rational use of cognitive resources.
\newblock {\em Psychology Review\/}~{\em 125\/}(1), 1--32.

\bibitem[\protect\citeauthoryear{Liu and Halpern}{Liu and
  Halpern}{2020}]{Liu2020MAB}
Liu, X. and J.~Y. Halpern (2020).
\newblock Bounded rationality in {L}as {V}egas: {P}robabilistic finite automata
  play multi-armed bandits.
\newblock In {\em Proc.~36th Conference on Uncertainty in Artificial
  Intelligence (UAI 2020)}, pp.\  1298--1307.
\newblock The proceedings are published as \emph{Proceedings of Machine
  Learning Research}, Vol. 124.

\bibitem[\protect\citeauthoryear{Neyman}{Neyman}{1985}]{Ney85}
Neyman, A. (1985).
\newblock Bounded complexity justifies cooperation in finitely repeated
  prisoner's dilemma.
\newblock {\em Economic Letters\/}~{\em 19}, 227--229.

\bibitem[\protect\citeauthoryear{Nguyen, Yang, Azaria, Kraus, and Tambe}{Nguyen
  et~al.}{2013}]{Nguyen2013SUQR}
Nguyen, T.~H., R.~Yang, A.~Azaria, S.~Kraus, and M.~Tambe (2013).
\newblock Analyzing the effectiveness of adversary modeling in security games.
\newblock In {\em Proc.~Twenty-Seventh National Conference on Artificial
  Intelligence (AAAI '13)}, pp.\  718–724.

\bibitem[\protect\citeauthoryear{Robinson}{Robinson}{1951}]{Robinson51}
Robinson, J. (1951).
\newblock An iterative method of solving a game.
\newblock {\em Annals of Mathematics\/}~{\em 54}, 296--301.

\bibitem[\protect\citeauthoryear{Rubinstein}{Rubinstein}{1986}]{Rub85}
Rubinstein, A. (1986).
\newblock Finite automata play the repeated prisoner's dilemma.
\newblock {\em Journal of Economic Theory\/}~{\em 39}, 83--96.

\bibitem[\protect\citeauthoryear{Stahl}{Stahl}{1993}]{dale1993levelk}
Stahl, D. (1993).
\newblock Evolution of smart$_n$ players.
\newblock {\em Games and Economic Behavior\/}~{\em 5\/}(4), 604--617.

\bibitem[\protect\citeauthoryear{Tambe}{Tambe}{2012}]{Tambe12}
Tambe, M. (2012).
\newblock {\em Security and Game Theory}.
\newblock Cambridge/New York: Cambridge University Press.

\bibitem[\protect\citeauthoryear{Wilson}{Wilson}{2015}]{W02}
Wilson, A. (2015).
\newblock Bounded memory and biases in information processing.
\newblock {\em Econometrica\/}~{\em 82\/}(6), 2257--2294.

\bibitem[\protect\citeauthoryear{Wright and Leyton-Brown}{Wright and
  Leyton-Brown}{2010}]{wright2010beyond}
Wright, J.~R. and K.~Leyton-Brown (2010).
\newblock Beyond equilibrium: {P}redicting human behavior in normal-form games.
\newblock In {\em Proc.~Twenty-Fourth National Conference on Artificial
  Intelligence (AAAI '10)}, pp.\  901--907.

\bibitem[\protect\citeauthoryear{Wright and Leyton-Brown}{Wright and
  Leyton-Brown}{2017}]{WLB17}
Wright, J.~R. and K.~Leyton-Brown (2017).
\newblock Predicting human behavior in unrepeated, simultaneous-move games.
\newblock {\em Games and Economic Behavior\/}~{\em 106}, 16--37.

\bibitem[\protect\citeauthoryear{Wright and Leyton-Brown}{Wright and
  Leyton-Brown}{2019}]{WLB19}
Wright, J.~R. and K.~Leyton-Brown (2019).
\newblock Level-0 models for predicting human behavior in normal-form games.
\newblock {\em Journal of A.I. Research\/}~{\em 64}, 357--383.

\bibitem[\protect\citeauthoryear{Xu, Perrault, Plumptre, Driciru, Wanyama,
  Rwetsiba, and Tambe}{Xu et~al.}{2020}]{XP20}
Xu, L., A.~Perrault, A.~Plumptre, M.~Driciru, F.~Wanyama, A.~Rwetsiba, and
  M.~Tambe (2020).
\newblock Game theory on the ground: {T}he effect of increased patrols on
  deterring poachers.
\newblock In {\em Workshop on AI for Social Good}.
\newblock https://arxiv.org/abs/2006.12411.

\bibitem[\protect\citeauthoryear{Yang, Ford, Tambe, and Lemieux}{Yang
  et~al.}{2014}]{yang2014PAWS}
Yang, R., B.~Ford, M.~Tambe, and A.~Lemieux (2014).
\newblock Adaptive resource allocation for wildlife protection against illegal
  poachers.
\newblock In {\em Proc.~Thirteenth International Joint Conference on Autonomous
  Agents and Multiagent Systems (AAMAS 2014)}, pp.\  453--460.

\end{thebibliography}


\end{document}